\newtheorem{theorem}{Theorem}
\newtheorem{lemma}{Lemma}
\title{On the Asymptotic Distribution of Variance Weighted KS
  Statistics}
\author{Timothy B. Armstrong\thanks{email: timothya@stanford.edu.
    This paper was written with generous support from a
    fellowship from the endowment in memory of
    B.F. Haley and E.S. Shaw through the Stanford Institute for Economic
    Policy Research.}
\\
Stanford University}
\begin{document}

\maketitle

\begin{abstract}
This paper derives the asymptotic distribution of variance weighted Kolmogorov-Smirnov statistics for conditional moment inequality models for the case of a one dimensional covariate.  The asymptotic distribution depends on the data generating process only through the variance of a single random variable, leading to critical values that can be calculated analytically.  By arguments in \citet{armstrong_weighted_2011}, the resulting tests achieve the best minimax rate for local alternatives out of available approaches in a broad class of settings.
\end{abstract}

\section{Introduction}\label{introduction_sec}

This paper derives the asymptotic distribution of variance weighted Kolmogorov-Smirnov (KS) statistics for conditional moment inequality models for the case of a one dimensional conditioning variable.  The arguments can be extended to higher dimensional conditioning variables as well.  The asymptotic distribution is extreme value, with a scaling that can be easily estimated.  Thus, critical values for this test are easy to compute and do not require resampling or simulation from complicated random processes.  By arguments in \citet{armstrong_weighted_2011}, these tests are, in a broad class of models, the only tests available that achieve the best minimax rate for power against local alternatives out
 of available tests.  Loosely speaking, the test will be optimal in this sense generically when the conditioning variable is continuously distributed and the model may not be point identified.  In cases where point identification is suspected, other approaches may be desirable, although the results in \citet{armstrong_weighted_2011} show that the tests in the present paper will still have good power properties.

The tests proposed in this paper are most closely related to those studied by \citet{armstrong_weighted_2011} and \citet{chetverikov_adaptive_2012}.  \citet{armstrong_weighted_2011} uses a similar class of statistics, but proposes confidence regions that satisfy the stronger criterion of containing the entire identified set.  That paper derives critical values that are conservative even for this stronger criterion.  While the critical values proposed in that paper would be valid in this context, they rely on conservative bounds for the null distribution of the test statistic.  The asymptotic distribution result in the present paper allows for less conservative critical values, leading to more powerful tests.  \citet{armstrong_weighted_2011} shows that, even with these conservative critical values, the confidence regions proposed in that paper achieve the best minimax rate of convergence in the Hausdorff metric to the identified set out of available methods in a broad class of models.  These arguments will give analogous local power results for the tests proposed in the present paper.
\citet{chetverikov_adaptive_2012}, which was written at around the same time as the present paper, proposes critical values for the tests treated in the present paper using simulation from an approximating random process.  That paper uses a novel argument that proves validity of simulation methods without deriving an asymptotic distribution or even showing that one exists.  The asymptotic distribution result of the present paper shows that such arguments are not necessary, at least in the one dimensional case, and provides critical values that can be computed easily without simulation.
Asymptotic distribution results are also important for asessing the asymptotic behavior of the critical values for use in power calculations.
On the other hand, the methods of \citet{chetverikov_adaptive_2012} apply to higher dimensional conditioning variables.  While the results of the present paper could be extended to this case, the approach of \citet{chetverikov_adaptive_2012} allows computation of valid critical values even without this extension.  It also seems likely that, while being more difficult to compute, the simulation based critical values of \citet{chetverikov_adaptive_2012} enjoy higher order accuracy properties similar to resampling methods in other settings, and that the methods of that paper would be useful in showing this.

Other approaches to inference on conditional moment inequalities include
\citet{andrews_inference_2009},
\citet{kim_kyoo_il_set_2008}, \citet{khan_inference_2009},
\citet{chernozhukov_intersection_2009},
\citet{lee_testing_2011},
\citet{ponomareva_inference_2010},
\citet{menzel_estimation_2008} and \citet{armstrong_asymptotically_2011}.  See \citet{armstrong_weighted_2011} for a discussion of some of these approaches, including power results that show that the approach in the present paper is optimal among these approaches in a certain sense in a broad class of models in the set identified case.
The literature on the related problem of inference with finitely many unconditional moment inequalities is also recent, but more developed.  Articles include
\citet{andrews_confidence_2004}, %
\citet{andrews_inference_2008}, %
\citet{andrews_validity_2009}, %
\citet{andrews_inference_2010}, %
\citet{chernozhukov_estimation_2007}, %
\citet{romano_inference_2010}, %
\citet{romano_inference_2008}, %
\citet{bugni_bootstrap_2010},
\citet{beresteanu_asymptotic_2008},
\citet{moon_bayesian_2009},
\citet{imbens_confidence_2004}
and \citet{stoye_more_2009}.

\section{Setup}

We wish to test the null hypothesis
\begin{align}\label{ineq_def_eq}
E(Y_i|X_i)\ge 0 \,\,\, a.s.
\end{align}
using iid observations $(X_1,Y_1),\ldots (X_n,Y_n)$.  Typically, $Y_i=m(W_i,\theta)$ for some parametric function $m$ of data $W_i$ that may include $X_i$.
See \citet{armstrong_weighted_2011} for several examples of such models along with references to the literature.
The moment inequality defines the identified set $\Theta_0$ of parameter values that satisfy this inequality, and it is of interest to design tests that are powerful against alternative hypotheses within the model defined by parameter values $\theta$ such that $m(W_i,\theta)$ does not satisfy the moment restriction (\ref{ineq_def_eq}).  Arguments in \citet{armstrong_weighted_2011} show that the tests in the present paper are most powerful among existing approaches against such alternatives in a local minimax sense in a broad class of models used in empirical economics.

This paper considers inference using the test statistic
\begin{align}\label{ks_stat_trunc_set_eq}
T_n=\left|\inf_{\{s,t|\hat\sigma(s,t)\ge \underline\sigma_n\}}
   \frac{E_nY_iI(s<X<s+t)}{\hat\sigma(s,t)}\right|_{-}.
\end{align}
The results in the following section derive the asymptotic distribution of $T_n$.  The result gives easily computable critical values for this test statistic.

\section{Asymptotic Distribution}

The following theorem gives the asymptotic distribution of $T_n$.  The result is stated for a single moment inequality (that is, $Y_i$ is one dimensional).  This result can be used immediately for the case of higher dimensional $Y_i$ using Bonferroni bounds, which will be conservative, but will not result in a first order loss in power.  Alternatively, the results could also be extended to this case to get less conservative critical values.

\begin{theorem}\label{asym_dist_thm}
Suppose that $E(Y_i|X_i)\ge 0$ a.s., the data are iid and
\begin{itemize}
\item[(i)] $X_i$ is one dimensional, and, for some
  $\underline x<\overline x$,  $E(Y_{i}|X_i)=0$ only on
  $[\underline x,\overline x]$.
\item[(ii)] $X_i$ has a density bounded from above away from infintity and
  from below away from zero.
\item[(iii)] $\sigma_{Y}^2(x)\equiv var(Y_i|X_i=x)$ is bounded away from
  zero and infinity.
\item[(iv)] $Y_i$ is bounded by a nonrandom constant with probability one.
\item[(v)] $\underline \sigma_n n^{1/2}/(\log n)^{5/2}\to\infty$.
\end{itemize}
Define
$c_n=E(\sigma_{Y}^2(x)I(\underline x\le X_i\le \overline x))
  /\underline\sigma_n$ and
$\hat c_n=E_n(Y_{i}^2(x)I(\underline x\le X_i\le \overline x))
  /\underline\sigma_n$.
Then
\begin{align*}
\liminf_n
P\left(\sqrt{n}T_{n}\le (2\log c_n)^{1/2}
     + \frac{\frac{3}{2}\log\log c_n - \log (2\pi^{1/2})+r}
         {(2\log c_n)^{1/2}}
  \right)
\ge \exp(\exp(r)),
\end{align*}
and the same statement holds with $c_n$ replaced with $\hat c_n$.
\end{theorem}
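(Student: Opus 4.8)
The plan is to read $\sqrt{n}\,T_n$ as the supremum of a self-normalized local empirical process indexed by intervals, to reduce it by strong approximation and a time change to the supremum of normalized increments of a Brownian motion, and then to carry out a sharp extreme-value calculation for that object. Writing $|\cdot|_{-}$ for the negative part, on the event that the infimum defining $T_n$ is negative we have
\[
\sqrt{n}\,T_n=\sup_{\{s,t\,:\,\hat\sigma(s,t)\ge\underline\sigma_n\}}\frac{-\sqrt{n}\,E_nY_iI(s<X_i<s+t)}{\hat\sigma(s,t)},
\]
so the assertion amounts to a lower bound on the distribution function of a maximum. A first reduction is localization: by (i) the conditional mean vanishes only on $[\underline x,\overline x]$ and is strictly positive outside it, so by (ii) and (iv) any interval with a non-negligible part outside $[\underline x,\overline x]$ carries a strictly positive population drift which, after division by $\hat\sigma(s,t)\ge\underline\sigma_n$, dominates the $O(\sqrt{\log c_n})$ stochastic fluctuations and cannot produce the large values that determine the supremum. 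Up to negligible error the index set may therefore be restricted to intervals inside $[\underline x,\overline x]$, on which the numerator is centered.

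Next comes the Gaussian coupling. On $[\underline x,\overline x]$ the centered process $\sqrt{n}\,(E_n-E)\,Y_iI(s<X_i<s+t)$ is a weighted empirical process indexed by intervals whose covariance kernel is the variance content of the overlap of two intervals, measured in the clock $\tau(x)=E\bigl(\sigma_Y^2(X_i)I(\underline x\le X_i\le x)\bigr)$; this is exactly the covariance of the increments $B(\tau(s+t))-B(\tau(s))$ of a standard Brownian motion $B$. Using the boundedness (iv) and the density regularity (ii) I would invoke a Koml\'os--Major--Tusn\'ady-type strong approximation to couple the process, uniformly over intervals, with these Brownian increments, and simultaneously establish uniform consistency of the normalizer $\hat\sigma(s,t)$ for the corresponding population standard deviation. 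After the time change $v=\tau(s)$, $u=\tau(s+t)$ the studentized statistic becomes the normalized increment $\{B(u)-B(v)\}/\sqrt{u-v}$, scanned over $0\le v<u\le\tau(\overline x)$, with the admissibility constraint $\hat\sigma(s,t)\ge\underline\sigma_n$ bounding the variance content $u-v$ below at a scale for which the number of resolution cells is of order $c_n$. Condition (v), $\underline\sigma_n n^{1/2}/(\log n)^{5/2}\to\infty$, is calibrated precisely so that the strong-approximation error and the relative error of $\hat\sigma$ are each $o\bigl((\log c_n)^{-1/2}\bigr)$ uniformly down to the smallest admissible scale, where intervals contain the fewest observations; since an error of exactly this order would shift the centering of the limit, this uniformity is indispensable and explains the power $(\log n)^{5/2}$.

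The heart of the proof, and the step I expect to be the main obstacle, is the sharp extreme-value analysis of $\sup\{B(u)-B(v)\}/\sqrt{u-v}$ with the exact centering $(2\log c_n)^{1/2}$ and correction $\{\tfrac32\log\log c_n-\log(2\pi^{1/2})+r\}/(2\log c_n)^{1/2}$. Viewed as a field in the location $v$ and the logarithm of the scale, this normalized increment is locally Gaussian with covariance of the form $1-a|\cdot|$ in each coordinate, that is, Pickands index $\alpha=1$ in both directions; a two-parameter scan of such a field has expected number of exceedance clusters of a level $u$ proportional to $c_n\,u^{3}e^{-u^2/2}$, the cubic power of $u$ being the signature of the two-dimensional (location and scale) scan. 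Setting this intensity $\Lambda(u)$ equal to $e^{-r}$ and solving, with $\log u=\tfrac12\log\log c_n+O(1)$, returns both the leading term and the coefficient $\tfrac32$ on $\log\log c_n$, while the additive constant $-\log(2\pi^{1/2})$ is fixed by the exact constant in the intensity. Because only the one-sided bound is claimed, I would obtain $P(\sup\le u)\ge\exp(-\Lambda(u))(1-o(1))$ by a blocking argument that exploits the independent increments of $B$: intervals lying in disjoint clock-blocks yield independent normalized increments, so the probability factorizes across blocks up to the contribution of the few intervals straddling block boundaries, and a localized Pickands double-sum computation inside each block identifies each factor; the matching upper bound on $P(\sup\le u)$, which the theorem does not require, is thereby avoided.

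The principal difficulties are therefore threefold: pinning down the cluster intensity with its exact multiplicative constant, which demands the correct Pickands-type local analysis in both the location and scale directions and the correct identification of the effective number of cells $c_n$; controlling the straddling intervals and the within-block second moment so that the product over blocks indeed converges to $\exp(-\Lambda(u))$; and showing that the accumulated errors---the discarded boundary drift, the strong-approximation error, and the relative error in the random normalizer $\hat\sigma$---are uniformly of smaller order than $(\log c_n)^{-1/2}$, since anything larger would displace the centering. Finally, replacing $c_n$ by $\hat c_n$ requires only $\hat c_n/c_n\to1$ in probability, which follows from the same uniform control of $E_n\bigl(Y_i^2I(\underline x\le X_i\le\overline x)\bigr)$ around its mean together with (v); as the centering and scaling constants are continuous and slowly varying in $c_n$, this substitution leaves the stated bound unchanged.
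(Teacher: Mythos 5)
Your proposal is sound and its skeleton coincides with the paper's: both reduce the self-normalized statistic, via the variance clock $\tau(x)=E\bigl(\sigma_Y^2(X_i)I(\underline x\le X_i\le x)\bigr)$ (the paper's transformation $g$), to the standardized increments $\{\mathbb{B}(u)-\mathbb{B}(v)\}/\sqrt{u-v}$ of a Brownian motion scanned over scales $u-v\ge\underline\sigma_n^2$, then invoke the Gumbel limit with exactly the centering $(2\log c_n)^{1/2}+\{\tfrac32\log\log c_n-\log(2\pi^{1/2})+r\}/(2\log c_n)^{1/2}$, and finally show that substituting $\hat\sigma$ for the population normalizer and $\hat c_n$ for $c_n$ is innocuous (the paper does this with Theorem 37 of \citet{pollard_convergence_1984} plus the decomposition $E\hat\sigma^2(s,t)=\tilde\sigma^2(s,t)+var(E(Y_i|X_i)I(s<X<s+t))$). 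The differences lie in how the two key lemmas are established. For the Gaussian reduction the paper does not use a KMT coupling but a Skorohod embedding conditional on the $X_i$'s --- the partial sums of the ordered $\tilde Y_{i:n}$ are written as $\mathbb{B}_n(\tau_{k,n}/n)$ --- with Bernstein's inequality controlling the stopping-time fluctuations; your KMT route must confront the fact that, conditional on the design, the summands are independent but not identically distributed, so you would need a Sakhanenko-type rather than classical KMT theorem (if it applies, it gives a sharper coupling error than the Skorohod embedding and could even weaken condition (v)). For the extreme-value step the paper simply cites \citet{kabluchko_extremes_2011}; your plan to re-derive that limit by a Pickands double-sum and blocking argument is precisely the content of that reference --- you correctly flag it as the main obstacle, but it need not be reproved. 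Two further points: first, your explicit localization to intervals inside $[\underline x,\overline x]$ by drift domination is a step the paper's appendix assumes away (it takes the infimum over the binding set from the outset), yet it is genuinely needed to connect the binding-set statistic to $T_n$ as defined with the theorem's $c_n$, so this is an addition, not a redundancy; second, a small calibration slip --- a \emph{relative} error $\epsilon$ in the normalizer or coupling multiplies a supremum of size $(2\log c_n)^{1/2}$, so to keep the centering undisturbed you need $\epsilon=o(1/\log c_n)$, not $\epsilon=o((\log c_n)^{-1/2})$; this is exactly the requirement $b_n^{1/2}\log c_n\to 0$ that produces the exponent $5/2$ in condition (v).
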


It is interesting to note that the critical value for $T_{n}$ is, to a
first order approximation, given by
$(2\log (1/\underline\sigma_n^2)/n)^{1/2}$.  This is the same as the
first order approximation to the critical value if the infimum were taken
only over $\hat\sigma(s,t)=\underline\sigma_n$, which would essentially
be a kernel estimator with bandwidth proportional to $\sigma_n$.  Thus, up
to a first order power comparison, there is asymptotically no loss in
power in considering larger bandwidths.  If $\underline \sigma_n=K
n^{-\delta}$ for some $0<\delta<1/2$, the critical value will be
approximately equal to
$(n/(4\delta \log n))^{1/2}$, so, loosely, the critical value is
proportional to $\delta^{-1/2}$ times a scaling that does not depend on
$\underline \sigma_n$.

The asymptotic distribution derived in Theorem \ref{asym_dist_thm} will be attained when the conditional moment inequality binds on $[\underline x,\overline x]$.  If the conditional moment inequality is binding for some, but not all, values of $x$ in $[\underline x,\overline x]$, the theorem gives an upper bound for the limiting distribution, resulting in a potentially conservative critical value.  In practice, one can compute the critical values setting $[\underline x,\overline x]$ equal to the entire support of $X_i$, or by using a first stage estimate of this set.  Perhaps surprisingly, Theorem \ref{asym_dist_thm} shows that using the entire support of $X_i$ does not lead to an asymptotic increase in the critical value up to a first order comparison relative to using a smaller, nondegenerate set.

\section{Conclusion}

This paper derives the asymptotic distribution of variance weighted KS statistics for conditional moment inequality models.  The results apply to a one dimensional conditioning variable, but could be extended to higher dimensional conditioning variables.
The result gives easily computable critical values for these test statistics.  The tests give the best available minimax rate for local alternatives in a broad class of models used in empirical economics.

\section*{Appendix: Proof of Theorem \ref{asym_dist_thm}}

We derive the asymptotic distribution of (\ref{ks_stat_trunc_set_eq})
when the infimum is taken over a fixed set $\mathcal{X}$ and the
inequality binds on this set.
This paper treats the case where $X_i$ is one dimensional and continuously
distributed and $\mathcal{X}=[\underline x,\overline x]$, although the
latter assumption can be relaxed without changing the argument in any
essential way.
We also assume that $X_i$ has a density bounded from above away from
infinity and from below away from zero, although this can be relaxed by
transforming $X_i$.

First, we derive the asymptotic
distribution of
\begin{align}
\label{ks_stat_tilde_eq}
\tilde T_n
\equiv \left|\inf_{\{s,t|\hat\sigma(s,t)\ge \underline\sigma_n, (s,s+t)\in\mathcal{X}\}}
   \frac{E_n\tilde Y_{i}I(s<X<s+t)}{\tilde\sigma(s,t)}\right|_{-}.
\end{align}
where $\tilde Y_i=Y_i-E(Y_i|X_i)$
and $\tilde\sigma^2(s,t)=var(\tilde Y_iI(s<X_i<s+t))$.  Then, we show
that
\begin{align*}
\inf_{\tilde\sigma(s,t)\ge \underline\sigma_n}
  \tilde\sigma(s,t)-\hat\sigma(s,t)
\end{align*}
is bounded from below away from zero when scaled by an appropriate rate,
so that replacing $\tilde\sigma(s,t)$ by $\hat\sigma(s,t)$ will not
increase the statistic too much.

Define
\begin{align*}
\tilde X_i%
=g(X_i)\equiv \int_{\underline x}^{X_i} \sigma^2_Y(x)f_X(x) \, dx.
\end{align*}
Then
\begin{align*}
&E[\tilde Y_{i}^2I(0\le \tilde X_i\le t)]
=E[\tilde Y_{i}^2I(g^{-1}(0)\le X_i \le g^{-1}(t))]
=\int_{\underline x}^{g^{-1}(t)} \sigma^2_Y(x)f_X(x) \, dx  \\
&=g(g^{-1})(t)
=t.
\end{align*}
Since monotone transformations of $X_i$ do not change the statistic, we
can work with $\tilde X_i$.  The corresponding value of
$\tilde \sigma^2(s,t)$ will be
\begin{align*}
E(\tilde Y_{i}^2I(s<\tilde X_i<s+t))
=E(\tilde Y_{i}^2I(0<\tilde X_i<s+t))
-E(\tilde Y_{i}^2I(0<\tilde X_i<s))
=s+t-s=t.
\end{align*}
This transformation takes $\underline x$ to $0$ and $\overline x$ to
\begin{align*}
\int_{\underline x}^{\overline x} \sigma^2_Y(x)f_X(x) \, dx
=E(\tilde Y_{i}^2I(\underline x\le X_i\le \overline x))
\equiv \tilde x_{u}.
\end{align*}

The following lemma gives an approximation to the process by a Brownian motion using a Skorohod embedding.
\begin{lemma}
Let $A_n$ be any sequence of sets of $(s,t)$ such that
$t\ge \underline \sigma_n$ for all $n$ and $(s,t)\in A_n$ and let $a_n$ be
an arbitrary sequence going to infinity.
There exists a sequence of Brownian motions $\mathbb{B}_n(t)$ such that
\begin{align*}
&\inf_{(s,t)\in A_n} \frac{\sqrt{n}E_n\tilde Y_{i}I(s<\tilde X_i<s+t)}
  {\sqrt{t}}  \\
&\ge \inf_{(s,t)\in B_n}  [\mathbb{B}_n(s+t)-\mathbb{B}_n(s)]/\sqrt{t}
   \cdot (1+\mathcal{O}_P((\log n)^{1/4}/(n^{1/4}\underline \sigma^{1/2})))
\end{align*}
where $B_n$ is the set of $(s',t')$ such that $|s-s'|$ and $|t'-t|/t$
are both less than $a_n\cdot (\log n)^{1/2}/(n^{1/2}\underline \sigma_n)$
for some $(s,t)\in A_n$.

\end{lemma}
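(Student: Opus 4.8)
The plan is to represent the centered empirical process as increments of a partial sum process and to embed that process into $\mathbb{B}_n$ by a Skorohod construction, so that the only error is a random time change which I can control uniformly. First I would write $\sqrt{n}E_n\tilde Y_i I(s<\tilde X_i<s+t)=S_n(s+t)-S_n(s)$, where $S_n(u)=n^{-1/2}\sum_i \tilde Y_i I(\tilde X_i\le u)$. In the transformed coordinate the construction of $\tilde X_i$ gives $E[\tilde Y_i^2 I(\tilde X_i\le u)]=u$, so $S_n$ is a mean zero process whose increments over disjoint intervals are uncorrelated and whose variance over $(s,s+t]$ is exactly $t$; this is what makes a Brownian motion in the $u$ scale the right target.

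Next I would order the data as $\tilde X_{(1)}\le\cdots\le\tilde X_{(n)}$ with concomitants $\tilde Y_{[j]}$. Conditionally on the order statistics the $\tilde Y_{[j]}$ are independent and mean zero with variances $v_j=\sigma_Y^2(g^{-1}(\tilde X_{(j)}))$, so a Skorohod embedding produces a Brownian motion $\mathbb{B}_n$ and stopping times $0=T_0\le T_1\le\cdots$ with $S_n(\tilde X_{(k)})=\mathbb{B}_n(T_k)$ and $E[T_k-T_{k-1}\mid\cdot]=v_k/n$, the higher moments of $T_k-T_{k-1}$ being $\mathcal{O}_P(n^{-2})$ because $\tilde Y_i$ is bounded (assumption (iv)). Since $S_n$ is constant between consecutive order statistics, this gives the exact identity $S_n(u)=\mathbb{B}_n(T(u))$ for the step function $T(u)=T_{K(u)}$, $K(u)=\#\{j:\tilde X_{(j)}\le u\}$. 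In particular $S_n(s+t)-S_n(s)=\mathbb{B}_n(s'+t')-\mathbb{B}_n(s')$ exactly, with $s'=T(s)$ and $t'=T(s+t)-T(s)$, so there is no approximation error in the values of the process, only in its time argument.

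The heart of the argument is then the uniform bound $\sup_u|T(u)-u|=\mathcal{O}_P((\log n/n)^{1/2})$. I would obtain this by decomposing $T(u)-u=[T_{K(u)}-\tfrac1n\sum_{j\le K(u)}v_j]+[\tfrac1n\sum_{j:\tilde X_{(j)}\le u}v_j-u]$. The first bracket is a martingale in $k$ with increment second moments $\mathcal{O}_P(n^{-2})$, so a Bernstein/Doob maximal inequality controls its supremum at rate $(\log n/n)^{1/2}$; the second bracket is an empirical-process deviation of $n^{-1}\sum_i \sigma_Y^2(g^{-1}(\tilde X_i))I(\tilde X_i\le u)$ from its mean, which by the coordinate change equals $u$, and is again $\mathcal{O}_P((\log n/n)^{1/2})$ uniformly, while the discretization gap from $T$ being a step function is controlled by order-statistic spacings, which are $\mathcal{O}_P(\log n/n)$ since the density is bounded below (assumption (ii)). Converting to the increment level, $|s'-s|$ inherits this absolute rate, and the relative error satisfies $|t'-t|/t=\mathcal{O}_P((\log n/n)^{1/2}/\sqrt{t})\le \mathcal{O}_P((\log n/n)^{1/2}/\sqrt{\underline\sigma_n})$, where the constraint $t\ge\underline\sigma_n$ on $A_n$ is exactly what turns the absolute time error into the stated relative bound; in particular $(s',t')\in B_n$ with room to spare for any $a_n\to\infty$, using assumption (v) to make these errors negligible.

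Finally I would write the left side term for $(s,t)\in A_n$ as $[\mathbb{B}_n(s'+t')-\mathbb{B}_n(s')]/\sqrt{t}=\{[\mathbb{B}_n(s'+t')-\mathbb{B}_n(s')]/\sqrt{t'}\}\cdot\sqrt{t'/t}$, bound the first factor below by $\inf_{(s,t)\in B_n}[\mathbb{B}_n(s+t)-\mathbb{B}_n(s)]/\sqrt{t}$ since $(s',t')\in B_n$, and absorb $\sqrt{t'/t}=1+\mathcal{O}_P((\log n/n)^{1/2}/\sqrt{\underline\sigma_n})$ into the multiplicative factor. Because the relevant infimum is nonpositive, multiplying it by a factor slightly larger than one only weakens the lower bound, which is precisely the direction the lemma asserts; taking the infimum over $A_n$ then yields the claim, the stated correction $1+\mathcal{O}_P((\log n)^{1/4}/(n^{1/4}\underline\sigma_n^{1/2}))$ being a conservative form of the rate just obtained. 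The main obstacle is the third step: proving the time-change bound \emph{uniformly} over all intervals with $t\ge\underline\sigma_n$ while correctly tracking the powers of $\underline\sigma_n$, since assumption (v) forces $\underline\sigma_n\to0$ and every error must be shown negligible at the right rate relative to it; assembling the martingale, empirical-process, and spacing estimates into a single uniform statement is the delicate part.
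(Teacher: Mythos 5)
Your proposal follows essentially the same route as the paper's own proof: a Skorohod embedding of the ordered-data partial sums conditional on the covariates, control of the random time change by splitting it into a stopping-time martingale part (handled by a Bernstein-type maximal bound) and an empirical-process part for the conditional variances, and then transfer of the infimum to the Brownian motion over the enlarged index set $B_n$ with a multiplicative correction absorbed using $t\ge\underline\sigma_n$. The one discrepancy is cosmetic: your stated relative error $\mathcal{O}_P((\log n/n)^{1/2}/\sqrt{t})$ does not follow from your global bound $\sup_u|T(u)-u|=\mathcal{O}_P((\log n/n)^{1/2})$ (that route gives a $1/t$ rather than $1/\sqrt{t}$ scaling), but either rate lies well within the $a_n(\log n)^{1/2}/(n^{1/2}\underline\sigma_n)$ tolerance defining $B_n$, so the conclusion is unaffected.
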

\begin{proof}
Let $\tilde X_{i:n}$ be the $i$th least value of $\tilde X_{i}$, and
let $\tilde Y_{i:n}$ be the corresponding value of $\tilde Y_{i}$.  By
a Skorohod embedding conditional on the $X_i$s, there is a Brownian motion
$\mathbb{B}_n(s)$ and stopping times $\tau_{1,n},\ldots,\tau_{n,n}$
such that
\begin{align*}
\frac{1}{\sqrt{n}} S_{n,k}
\equiv \frac{1}{\sqrt{n}}\sum_{i=1}^k \tilde Y_{i:n}
=\mathbb{B}_n(\tau_{k,n}/n).
\end{align*}

We have
\begin{align*}
&\frac{\sqrt{n}E_n\tilde Y_{i}I(s<\tilde X_i<s+t)}{\sqrt{t}}
=\frac{1}{\sqrt{n}} \frac{S_{n,k(s+t)}-S_{n,k(s)}}{\sqrt{t}}  \\
&=\frac{\mathbb{B}_n(\tau_{k(s+t),n}/n)-\mathbb{B}_n(\tau_{k(s),n}/n)}
  {\sqrt{\tau_{k(s+t),n}/n-\tau_{k(s),n}/n}}
\left(\frac{\tau_{k(s+t),n}/n-\tau_{k(s),n}/n}{t}\right)^{1/2}
\end{align*}
where $k(s)=\min \{i|\tilde X_{i:n,n}>s\}$.

The second term is the square root of
\begin{align}\label{stop_time_sum_eq}
\frac{\tau_{k(s+t)}-\tau_{k(s)}}{nt}
=\frac{1}{nt} \sum_{i=k(s)+1}^{k(s+t)} (\tau_{i,n}-\tau_{i-1,n}).
\end{align}
We have
\begin{align*}
E(\tau_{i,n}-\tau_{i-1:n}|\tilde X_1,\ldots,\tilde X_n)
=\sigma_{Y}^2(\tilde X_{i:n,n})
\end{align*}
so that the expectation of (\ref{stop_time_sum_eq}) given
$\tilde X_1,\ldots,\tilde X_n$ is
\begin{align}\label{st_given_x_eq}
\frac{1}{nt} \sum_{i=k(s)+1}^{k(s+t)} \sigma_{Y}^2(\tilde X_{i:n,n})
=\frac{1}{t} E_n \sigma_{Y}^2(\tilde X_i)I(s<X_i<s+t)
=\frac{E_n \sigma_{Y}^2(\tilde X_i)I(s<X_i<s+t)}
  {E\sigma_{Y}^2(\tilde X_i)I(s<X_i<s+t)}.
\end{align}
This converges to $1$ at a $\sqrt{n\underline \sigma^2_n/\log n}$ by
Theorem 37 in \citet{pollard_convergence_1984}.
We can bound the difference between (\ref{stop_time_sum_eq}) and
(\ref{st_given_x_eq}) using Bernstein's inequality conditional on the
$X_i$s.
We have, for any $k$ and $k'$,
\begin{align*}
&P\left(\frac{1}{nt} \left|\sum_{i=k}^{k'}
  (\tau_{i,n}-\tau_{i-1,n})
   -E(\tau_{i,n}-\tau_{i-1,n}|\tilde X_1,\ldots,\tilde X_n)\right|
  >\varepsilon\right)
\le 2\exp\left(-\frac{1}{2}
  \frac{n(t\varepsilon)^2}
     {K_1 +Mt\varepsilon/3}
  \right)
\end{align*}
where $K_1$ is a bound for the variance of the stopping times and $M$ is a
bound for their support.  For some constant $K_2$ and $t\varepsilon$
bounded from above, this is less than
\begin{align*}
2\exp(-n(t\varepsilon)^2/K_2)
\le 2\exp(-n(\underline\sigma_n\varepsilon)^2/K_2).
\end{align*}
Setting $\varepsilon=K_3 \sqrt{(\log n)/(n\underline \sigma^2_n)}$ for
$K_3$ large enough, this gives a bound of
$2\exp(-K_3^2\log n/K_2)$.
The probability of the supremum of the difference between
(\ref{stop_time_sum_eq}) and (\ref{st_given_x_eq}) being greater than
$\varepsilon$ is bounded by $n^2$ times this quantity ($(k(s),k(s+t))$ can
take on no more than $n^2$ values), and this can be made to go to zero
with $n$ by making $K_3$ large enough.

This argument shows that the following event holds with probability
approaching one for any sequence $a_n\to\infty$.  For any $(s,t)\in A_n$,
there are $(s',t')$ with $|s-s'|$ and $|t'-t|/t$ less than
$a_n\cdot (\log n)^{1/2}/(n^{1/2}\underline \sigma_n)$ such that
\begin{align*}
\frac{\sqrt{n}E_n\tilde Y_{i}I(s<\tilde X_i<s+t)}
  {\sqrt{t}}
=[\mathbb{B}_n(s+t)-\mathbb{B}_n(s)]/\sqrt{t}
   \cdot (1+\mathcal{O}_P((\log n)^{1/4}/(n^{1/4}\underline \sigma^{1/2}))).
\end{align*}
This completes the proof.

\end{proof}

We apply this lemma with 
\begin{align*}
A_n=\{(s,t)|(s,s+t)\in [0,\tilde x_{u}], t\ge \underline\sigma_n^2\}
\end{align*}
to get a lower bound of
\begin{align*}
\inf_{0\le s\le s+t\le \tilde x_{u}+b_n, t\ge \underline\sigma_n^2(1-b_n)}
[\mathbb{B}_n(s+t)-\mathbb{B}_n(s)]/\sqrt{t}
   \cdot (1+b_n^{1/2})
\end{align*}
with probability approaching one
for the term in (\ref{ks_stat_tilde_eq}) where
$b_n=a_n\cdot (\log n)^{1/2}/(n^{1/2}\underline \sigma_n)$.  By the
invariance properties of the Brownian motion, this has the same
distribution as
\begin{align*}
\inf_{0\le s\le s+t\le (\tilde x_{u}+b_n)/[\underline\sigma_n^2(1-b_n)], t\ge 1}
[\mathbb{B}(s+t)-\mathbb{B}(s)]/\sqrt{t}
   \cdot (1+b_n^{1/2})
\end{align*}
for a Brownian motion $\mathbb{B}$.
By arguments in \citet{kabluchko_extremes_2011},
we have, letting $c_n'=(\tilde x_{u}+b_n)/[\underline\sigma_n^2(1-b_n)]$
\begin{align*}
&P\left(
-\inf_{0\le s\le s+t\le (\tilde x_{u}+b_n)/[\underline\sigma_n^2(1-b_n)], t\ge 1}
[\mathbb{B}(s+t)-\mathbb{B}(s)]/\sqrt{t}
  \le (2\log c_n')^{1/2}
     + \frac{\frac{3}{2}\log\log c_n' - \log (2\pi^{1/2})+r}
         {(2\log c_n')^{1/2}}
\right)  \\
&\stackrel{n\to\infty}{\to} \exp(-\exp(-r)).
\end{align*}
This will hold with $c_n'$ replaced by
$c_n= \tilde x_{u}/\underline\sigma_n^2$ as long as
$c_n/c_n'\to 1$, which will hold as long as $b_n\to 0$.
This gives
\begin{align*}
&P\left(
\sqrt{n}\tilde T_{n}/(1+b_n^{1/2})
  \le (2\log c_n)^{1/2}
     + \frac{\frac{3}{2}\log\log c_n - \log (2\pi^{1/2})+r}
         {(2\log c_n)^{1/2}}
\right)  \\
&\stackrel{n\to\infty}{\to} \exp(-\exp(-r)).
\end{align*}
This will hold with $b_n$ replaced by $0$ if
\begin{align*}
b_n^{1/2}\log c_n
=a_n^{1/2}\cdot (\log n)^{1/4}/(n^{1/4}\underline \sigma_n^{1/2})
  \log (\tilde x_{u}/\underline\sigma_n)
\to 0
\end{align*}
which will hold for $a_n$ increasing slowly enough as long as
\begin{align}\label{sigma_rate_eq}
\underline \sigma_n n^{1/2}/(\log n)^{5/2}\to\infty.
\end{align}

To get the same extreme value limit with $\sigma(s,t)$ replaced by
$\hat\sigma(s,t)$, it suffices to show that
\begin{align}\label{sigma_diff_eq}
\frac{\hat\sigma(s,t)-\tilde\sigma(s,t)}{\tilde\sigma(s,t)}
\end{align}
converges to $1$ at a $(\log n)$ rate uniformly in $(s,t)$ such that
$\tilde\sigma(s,t)\ge \underline \sigma_n^2/2$.
To show
that replacing $\sigma(s,t)$ with its estimate gives a limit that is
asymptotically no greater than an extreme value random variable, it
suffices to show that this quantity is bounded from below by
$1-o_p(\log n)$ term, and converges to $1$ at any rate (the latter
condition is needed for the sets on which the infimum is taken to converge
quickly enough).  We have
\begin{align*}
&E\hat\sigma^2(s,t)=var(Y_iI(s<X<s+t))  \\
&=E[var(Y_i|X_i)I(s<X<s+t)]
+var(E(Y_i|X_i)I(s<X<s+t))  \\
&=\tilde\sigma^2(s,t)+var(E(Y_i|X_i)I(s<X<s+t)).
\end{align*}
By Theorem 37 in \citet{pollard_convergence_1984},
$(\hat\sigma^2(s,t)-E\hat\sigma^2(s,t))/\tilde\sigma^2(s,t)$
converges to $1$ at a $(n\underline\sigma^2_n/\log n)^{1/2}$ rate
uniformly in $\tilde\sigma(s,t)\ge \underline\sigma_n/2$, which translates
into at least a
$(n\underline\sigma^2_n/\log n)^{1/4}$ rate for
\begin{align*}
\frac{\hat\sigma(s,t)-\sqrt{E\hat\sigma^2(s,t))}}{\tilde\sigma(s,t)}
\ge \frac{\hat\sigma(s,t)-\tilde\sigma(s,t)}{\tilde\sigma(s,t)},
\end{align*}
which will be fast enough as long as (\ref{sigma_rate_eq}) holds.  For
(\ref{sigma_diff_eq}) to converge to $1$, it now suffices to have
\begin{align*}
\frac{var(E(Y_i|X_i)I(s<X<s+t))}{\tilde\sigma^2(s,t)}
\le C var(E(Y_i|X_i)I(s<X<s+t))/t\to 0.
\end{align*}
Consistency of $\hat\sigma(s,t)$ also implies that $\hat c_n/c_n$
converges in probability to one, which allows $c_n$ to be replaced with
$\hat c_n$.

\bibliography{library}

\end{document}